\theoremstyle{definition}
\newtheorem{theorem}{Theorem}
\newtheorem{lemma}{Lemma}
\newtheorem{remark}{Remark}
\DeclareMathAlphabet{\mathsfsl}{OT1}{cmss}{m}{sl}
\numberwithin{equation}{section}
\newcommand{\D}{\mathrm{d}}
\def\Lb{\underline{L}}
\def\ub{\underline{u}}
\def\Cb{\underline{C}}
\newcommand{\Db}{\underline{D}}
\newcommand{\hb}{\underline{h}}
\begin{document}

\title[Instability in spherical symmetry]{A robust proof of the instability of naked singularities of a scalar field in spherical symmetry}

\author[Jue Liu]{Jue Liu}
\address{Department of Mathematics, Sun Yat-sen University\\ Guangzhou, China}
\email{liuj337@mail2.sysu.edu.cn}
\author[Junbin Li]{Junbin Li}
\address{Department of Mathematics, Sun Yat-sen University\\ Guangzhou, China}
\email{lijunbin@mail.sysu.edu.cn}

\thanks{Both authors are supported by NSFC 11501582, 11521101.}

\date{}

\maketitle

\begin{abstract}

Published in 1999, Christodoulou proved that the naked singularities of a self-gravitating scalar field are not stable in spherical symmetry and therefore the cosmic censorship conjecture is true in this context. The original proof is by contradiction and sharp estimates are obtained strictly depending on spherical symmetry. In this paper, appropriate a priori estimates for the solution are obtained. These estimates are more relaxed but sufficient for giving another robust argument in proving the instability, in particular not by contradiction. In another related paper, we are able to prove instability theorems of the spherical symmetric naked singularities under certain isotropic gravitational perturbations without symmetries. The argument given in this paper plays a central role.
\end{abstract}

%\tableofcontents

\setcounter{tocdepth}{1}

%\parskip=\baselineskip

\allowdisplaybreaks

\section{Introduction}

In the paper \cite{Chr99}, Christodoulou proved both the \emph{weak cosmic censorship conjecture} and the \emph{strong cosmic censorship conjecture} for spherically symmetric solutions of the Einstein equations coupled with a massless scalar field. The coupled system reads
\begin{align*}
\mathbf{Ric}_{\alpha\beta}-\frac{1}{2}\mathbf{R}g_{\alpha\beta}=\mathbf{T}_{\alpha\beta}=\nabla_\alpha\phi\nabla_\beta\phi-\frac{1}{2}g_{\alpha\beta}g^{\mu\nu}\nabla_\mu\phi\nabla_\nu\phi,
\end{align*}
which we call the Einstein-scalar field equations. The proof, which is by contradiction, contains sharp estimates which may not be easily obtained beyond spherical symmetry. In this paper, we will provide a robust proof which is not by contradiction and contains only relaxed estimates. The main advantage of this proof is that it has the potential to be extended beyond spherical symmetry.

Consider the characteristic initial value problem of the Einstein-scalar field equations in spherical symmetry. The initial data is given on a null cone $C_o$ issuing from a fixed point $o$ of the symmetry group $SO(3)$, and consists of a function $\alpha_0=\frac{\partial}{\partial r}(r\phi)\big|_{C_o}$ defining on $[0,+\infty)$, where $r$ is area radius of the orbit spherical sections of $C_o$, and $\phi$ is the scalar field function. Then what was exactly proved by Christodoulou is the following theorem.
\begin{theorem}[Christodoulou]\label{Chr99}
Let $\mathcal{E}$ be the complement of the collection of functions $\alpha_0\in BV$ whose maximal future development is either complete, or possesses a complete future null infinity and a strictly spacelike singular future boundary. Then if $\alpha_0\in\mathcal{E}$, then there exists some $f\in BV$ such that $\alpha_0+\lambda f\notin\mathcal{E}$ and has non-complete maximal future development for all $\lambda\ne0$. Moreover, if $\alpha_0+\lambda f\equiv\alpha_0'+\lambda'f'$, then $\alpha_0\equiv \alpha_0'$, $f\equiv f'$ and $\lambda=\lambda'$.
\end{theorem}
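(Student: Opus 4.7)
The plan is to replace Christodoulou's sharp, by-contradiction argument with a constructive one: given $\alpha_0\in\mathcal{E}$, I would explicitly build $f\in BV$ and produce a trapped surface in the maximal future development of $\alpha_0+\lambda f$, after which the conclusion follows by standard Penrose-type reasoning in spherical symmetry. The starting point is the structural result from \cite{Chr99}: membership of $\mathcal{E}$ forces the maximal future development to contain a ``first singular point'' $e$ on the central world line, with the area radius $r$ and the Hawking mass $m$ both tending to zero as one approaches $e$ through its regular causal past $D^-(e)$. I would introduce double-null coordinates $(u,v)$ with $C_o=\{v=0\}$ and $C_e^-=\{u=0\}$, so that $e=(0,0)$ and the naked-singularity region is $\{u>0,\,v>0\}$.

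The second step is to apply the relaxed a priori estimates developed in the earlier sections of the present paper. These control, in a small neighborhood of $e$, the geometric functions $r$, $m$, $\partial_u r$, $\partial_v r$ and the null energy fluxes of $\alpha=\partial_u(r\phi)$ and $\alphab=\partial_v(r\phi)$ in scale-invariant form, schematically
\begin{align*}
\int_0^{v}\alphab^{2}(u,v')\,\D v'+\int_0^{u}\alpha^{2}(u',v)\,\D u'\lesssim r(u,v).
\end{align*}
The perturbation $f$ is then chosen as a $BV$ bump supported on an interval $[r_1,r_2]\subset C_o$ dictated by the background: the corresponding incoming characteristics must penetrate into the neighborhood of $e$ where the a priori estimates apply. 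Using the Raychaudhuri equations for $\partial_u m$ and $\partial_v m$ together with the background estimates, I would show that the perturbation contributes to the Hawking mass on an outgoing cone $C_{u_*}$ just to the future of the support of $f$ an amount $\Delta m\ge c\lambda^{2}$, with $c>0$ depending on $\|f\|_{BV}$, while the cross-terms with the background $\alpha_0$ remain subleading.

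Third, since $r\to 0$ along $C_{u_*}$ as one approaches the axis, the ratio $2m/r\ge 2\Delta m/r\ge 2c\lambda^{2}/r$ diverges, so on some orbit sphere on $C_{u_*}$ one has the mass-aspect $\mu=2m/r>1$. That sphere is trapped; in spherical symmetry for the massless scalar field, the presence of a trapped surface forces both incompleteness of the maximal future development and, via monotonicity of $m$ along incoming cones together with Penrose-type arguments, the existence of an event horizon with complete future null infinity and a strictly spacelike singular boundary inside. Hence $\alpha_0+\lambda f\notin\mathcal{E}$ and its maximal future development is non-complete for every $\lambda\ne 0$. The uniqueness assertion in the ``moreover'' clause is obtained by normalization: $f$ is built canonically from the first singular point of $\alpha_0$ (its support and $BV$-norm are fixed by the background geometry), so the decomposition $\alpha_0+\lambda f$ uniquely identifies the triple $(\alpha_0,f,\lambda)$.

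The hard part is the quantitative lower bound $\Delta m\ge c\lambda^{2}$ in the perturbed spacetime. One cannot appeal, as in \cite{Chr99}, to sharp self-similar estimates followed by a contradiction; instead one must propagate the perturbation through a background controlled only by the relaxed estimates of the present paper and dominate the nonlinear cross-terms between $\alpha_0$ and $\lambda f$ uniformly as the support of $f$ contracts towards the singularity. This is where the robustness of the new a priori estimates is essential, and it is precisely this feature that will allow the argument to extend beyond spherical symmetry in the companion paper.
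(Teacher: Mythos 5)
Your overall architecture (perturb the data, concentrate mass near the singular endpoint $e$, invoke Christodoulou's trapped-surface formation theorem, conclude completeness of null infinity and a spacelike singular boundary) matches the paper's. But the central quantitative step --- the one you yourself flag as ``the hard part'' --- fails as stated, in two ways. First, the uniform lower bound $\Delta m\ge c\lambda^2$ is false and is not what the argument needs: the relevant mass difference is $m(\delta_n,u_n)-m(0,u_n)=\tfrac12\int_0^{\delta_n}(-\hb)\Omega^{-2}(rL\phi)^2\,\D\ub\sim\delta_n\Omega_n^{-2}\times(\text{amplitude})^2$, and since $\delta_n$ must be taken exponentially small in a negative power of $\Omega_n$ (see \eqref{deltan}, \eqref{deltau}), this quantity tends to \emph{zero} as $u_n\to0^-$. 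What makes the argument work is that the competing threshold $c_1 r\frac{r-r_n}{r_n}\log\frac{r_n}{r-r_n}$ of Theorem \ref{Chr91} is even smaller, because $r-r_n\lesssim\Omega_n^2\delta_n$. Second, your passage from $\Delta m$ to a trapped sphere --- ``since $r\to0$ along $C_{u_*}$, the ratio $2m/r\ge 2c\lambda^2/r$ diverges'' --- is invalid: the mass inside a sphere of small radius on $C_{u_*}$ does not contain the perturbation's contribution, which enters the flux integral only at the larger values of $\ub$ where the perturbed characteristics cross $C_{u_*}$; you cannot shrink $r$ while retaining $\Delta m$. The correct mechanism is the logarithmic mass-concentration criterion of Theorem \ref{Chr91} applied to two spheres whose radii differ by $O(\Omega_n^2\delta_n)$, and verifying it requires the precise balance between $\delta_n$ and $\Omega_n$, not a soft divergence of $2m/r$.

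Two further gaps. The support of $f$ must lie in $[r_0,\infty)$, entirely outside the causal past of $e$; otherwise $e$, the double null coordinate, and the functions $\varphi$, $\Omega_0$ on $\Cb_0$ all change with $\lambda$, which destroys both the quantitative estimates and the injectivity clause (your ``canonical normalization'' presupposes exactly this support condition without stating it). And the proof is really a trichotomy: if $\varphi=rL\phi|_{\Cb_0}$ is unbounded (Theorem \ref{instability1}) or the genericity condition \eqref{genericcondition} holds (Theorem \ref{instability2}), no perturbation is needed; $f$ is designed solely to restore \eqref{genericcondition} when it fails, and the interaction with the background is handled not by claiming the cross terms are ``subleading'' but by playing the failure of \eqref{genericcondition} for $\alpha_0$ (a $\limsup\le1$ bound) against the divergence produced by $\lambda f$. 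Your unconditional perturbation with an unproved ``cross terms are subleading'' claim has no mechanism to exclude cancellation between $\lambda f$ and the background at the relevant scales.
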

We may therefore say that the exceptional set $\mathcal{E}$ is of codimension at least $1$.  The proof can roughly be divided into three steps. Consider an arbitrary initial data $\alpha_0\in BV$. The first step is, which was shown in \cite{Chr93}, that the maximal future development of such data is complete, unless there exists a singular endpoint $e$ of the central timelike geodesic $\Gamma$ from $o$. A sharp criterion of the appearance of $e$ was also found: $\frac{2m}{r}\nrightarrow0$ when approaching $e$, where $m$ is the mass function. The second step is to understand how an apparent horizon forms. We have the following theorem also by Christodoulou.
\begin{theorem}[Christodoulou, \cite{Chr91}]\label{Chr91}
Consider the spherically symmetric solution of the Einstein-scalar field equations with initial data given on a null cone $C_o$. Let $S_1$ and $S_2$ be two sphercal sections with area radii $r_1$, $r_2$ and mass contents $m_1$, $m_2$, and $S_2$ is in the exterior to $S_1$. Denote
$$\delta=\frac{r_2}{r_1}-1.$$
Then there exists positive constants $c_0$, $c_1$ such that if $\delta\le c_0$ and
\begin{align*}2(m_2-m_1)>c_1r_2\delta\log\left(\frac{1}{\delta}\right),\end{align*}
then the incoming null cone through $S_2$ intersects the apparent horizon and enters the trapped region, the region of trapped surfaces. Moreover, there exists an event horizon and the trapped region terminates at a spacelike singular boundary.\end{theorem}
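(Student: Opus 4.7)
My plan is to work in double-null coordinates $(u,\ub)$ with $u$ constant on incoming cones and $\ub$ constant on outgoing cones, normalized so that the common outgoing cone through $S_1, S_2$ is $\{u = u_0\}$ and $S_1, S_2$ lie at parameters $\ub = \ub_1 < \ub_2$. Let $\nu = \partial_u r < 0$, $\lambda = \partial_{\ub} r$ and $\mu = 2m/r$; then $1-\mu = -4\nu\lambda/\Omega^2$ and a trapped sphere is precisely one with $\lambda < 0$ (equivalently $\mu > 1$). The target reduces to showing that, along the incoming cone $\Cb_2 := \{\ub = \ub_2\}$, the function $\lambda$ vanishes at some value of $u$ strictly before the central geodesic is reached.

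The argument proceeds in three stages. First, the outgoing Hawking-mass identity on $\{u = u_0\}$,
$$m_2 - m_1 = \frac{1}{2}\int_{\ub_1}^{\ub_2} \frac{r^2(1-\mu)}{\lambda}(\partial_{\ub}\phi)^2\, d\ub,$$
combined with the smallness $\delta \le c_0$ (which keeps $1-\mu$ and $\lambda$ comparable to $1-2m_1/r_1$ and $\lambda|_{S_1}$ along the short outgoing arc $[\ub_1, \ub_2]$), converts the hypothesis $2(m_2 - m_1) > c_1 r_2\delta\log(1/\delta)$ into a quantitative lower bound
$$\int_{\ub_1}^{\ub_2} r^2(\partial_{\ub}\phi)^2\, d\ub \;\gtrsim\; c_1\, r_2\,\delta\,\log(1/\delta)$$
for the outgoing scalar-field flux between $S_1$ and $S_2$. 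Second, transport this data into the null pencil $\mathcal{D}$ bounded by $\{u = u_0\}$, $\Cb_1$, and $\Cb_2$ by integrating the wave equation $\partial_u\partial_{\ub}(r\phi) = \lot$ characteristically. In the small-$\delta$ regime the nonlinear lower-order terms are subdominant, and tracking the $1/r$ amplification of the scalar field amplitude as the area radius contracts from $r_2$ down to scales $\sim\delta r_2$ produces the $\log(1/\delta)$ factor; in particular, the Hawking mass on spheres lying in the interior of $\mathcal{D}$ is bounded below by a positive constant times $r_2\,\delta\log(1/\delta)$. Third, along $\Cb_2$ integrate the Raychaudhuri-type ODE
$$\partial_u \log\lambda \;=\; \frac{\nu\,\mu}{r(1-\mu)}$$
coupled to the incoming mass identity $\partial_u m = r^2(1-\mu)(\partial_u\phi)^2/(2\nu)$. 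The interior mass lower bound from Step~2 ensures that the geometric forcing $\nu\mu/(r(1-\mu))$ on the right dominates at sufficiently small $r$, driving $\mu\uparrow 1$ (equivalently $\lambda\downarrow 0$) at some $r_* \in (0, r_2)$, which is precisely the formation of a trapped sphere on $\Cb_2$.

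The main obstacle is Step 2: executing the flux transport with the sharp $\log(1/\delta)$ constant, and establishing the quantitative mass lower bound on interior spheres. The delicate point is a bootstrap argument in which the geometry ($1-\mu \ge 1/2$, and $\lambda, \nu$ within a constant of their reference values on $\{u = u_0\}$) must be maintained while the wave equation's lower-order coefficients — which involve ratios like $\mu/(1-\mu)$ — threaten to degenerate as the apparent horizon forms. One handles this by running the bootstrap only up to the first instant a trapped sphere is produced on $\Cb_2$, beyond which no geometric control is required. Once the trapped sphere is in place on $\Cb_2$, the remaining assertions — the existence of an event horizon and the strictly spacelike singular future boundary of the trapped region — are standard consequences of the spherically symmetric trapped-region theory developed in \cite{Chr93}.
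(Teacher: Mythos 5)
First, a point of orientation: the paper does not prove this statement. Theorem \ref{Chr91} is quoted from Christodoulou's 1991 paper \cite{Chr91} and is used throughout as a black box --- the entire purpose of the present paper is to verify its \emph{hypothesis} (in the form \eqref{maintheoremcondition}) on a sequence of spheres approaching the singular endpoint $e$. So there is no in-paper proof to compare yours against, and your proposal has to be judged against Christodoulou's original argument.

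Judged on that basis there is a genuine gap, and it sits exactly where you place your ``main obstacle'': Step 2 is not a technical difficulty to be absorbed into a bootstrap, it is the entire theorem, and the mechanism you propose for it points in the wrong direction. In the untrapped region the Hawking mass is monotonically \emph{non-increasing} toward the future along incoming cones (by \eqref{Dbm}, $\Db m=-\frac{1}{2}h(r\Lb\phi)^2\le0$ where $h>0$), so the outgoing flux between $S_1$ and $S_2$ cannot be ``transported into the null pencil'' to produce a pointwise lower bound on the mass of interior spheres: the scalar field is free to radiate toward the center across $\Cb_1$ and be lost, and no soft wave-equation argument prevents this. Christodoulou's actual proof is by contradiction: assuming the incoming cone through $S_2$ remains untrapped until the incoming cone through $S_1$ collapses to the center --- at which point $r|_{\Cb_2}\le r_2-r_1\approx\delta r_2$, since the radius difference between the two incoming cones is non-increasing by \eqref{Dhb} --- one shows that the mass carried between the two cones can have decreased by at most a factor controlled by $\log(1/\delta)$, the logarithm arising from integrating $|\D r|/r$ as $r$ ranges over $[\delta r_2,\,r_2]$; the hypothesis then forces $2m/r>1$ somewhere on $\Cb_2$ before the collapse, a contradiction. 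In other words, the $\log(1/\delta)$ is the \emph{loss budget}, appearing in an upper bound on the mass radiated inward under the no-trapping assumption, not in a lower bound produced by ``$1/r$ amplification'' of the wave; your description has the logic reversed. Your Step 1 (correct once one normalizes $\lambda=1$ on the initial cone, so that $(1-\mu)/\lambda\le1$ there; it does not follow from $\delta\le c_0$ alone) and Step 3 (the equation $\partial_u\log\lambda=\nu\mu/(r(1-\mu))$, which is the correct integrated form of \eqref{Dbh}) are legitimate pieces of the standard framework, but without the quantitative control of the mass loss along the two incoming cones the argument does not close.
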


Inspired by this theorem, we then consider the future of $\Cb_e\bigcup C_o$, where $\Cb_e$ is the boundary of the causal past of $e$ and intersects $C_o$ at $s=s_e$. Then the last step, what was really proved in \cite{Chr99} is that, allowing a perturbation on $\alpha_0$, there exists a sequence $p_n\in\Gamma$ where $p_n\to e$ such that the null cone $C_{p_n}$ issuing from $p_n$ satisfies the assumptions of Theorem \ref{Chr91} at two spheres $S_{1,n}$ and $S_{2,n}$ on $C_{p_n}$, between which the distance tends to zero. The corresponding spacetimes have an event horizon and therefore possess a complete future null infinity, which verifies the \emph{weak cosmic censorship conjecture}. Moreover, the distance of $S_{1,n}$ and $S_{2,n}$ tending to zero implies that the apparent horizon issues from $e$ and therefore the future boundary of the maximal development is spacelike and singular. This verifies the \emph{strong cosmic censorship conjecture}. 

In this paper, we are going to give a new argument of this last step. We would like to give a statement of this single step. First of all, we introduce a double null coordinate $(\ub,u)$ of the quotient spacetime relative to the singular endpoint $e$ of $\Gamma$ as follows. Let $\ub$ and $u$ be optical functions, i.e., their level sets, which we denote by $\Cb_{\ub}$ and $C_u$, are incoming and outgoing null cones respectively. We take $\ub=0, u=-r$ on $\Cb_e$, and take $u=u_0$ and $\ub$ increasing towards the future on $C_o$ where $-u_0=r_0$ is the area radius of the sphere $\Cb_e\bigcap C_o$. Using this notation, we will write $\Cb_e=\Cb_0$, $C_o=C_{u_0}$. In terms of the double null coordinate $(\ub,u)$ relative to $e$, what we are going to reprove can be stated as follows.
\begin{theorem}\label{main}
Let $\mathcal{E}$ be the complement of the collection of functions $\alpha_0\in BV$ whose maximal future development is either complete or has the property, that if $e$ is the singular endpoint of $\Gamma$ and $(\ub,u)$ is the double null coordinate relative to $e$, then there exists two sequences $\delta_n\to0^+$ and $u_n\to0^-$ such that
\begin{align}\label{maintheoremcondition}
2(m-m_n)>\frac{c_1r(r-r_n)}{r_n}\log\frac{r_n}{r-r_n},\ \text{with}\ \frac{r-r_n}{r_n}\le c_0
\end{align}
where $m$ and $r$ take values at $(\ub,u)=(\delta_n,u_n)$, $m_n=m(0,u_n)$, $r_n=|u_n|$ and $c_0,c_1$ are the constants given in Theorem \ref{Chr91}. Then if $\alpha_0\in\mathcal{E}$, then there exists two functions $f_1,f_2\in BV$, such that $\alpha_0+\lambda_1f_1+\lambda_2f_2\notin\mathcal{E}$ and has non-complete maximal future development for all $\lambda_1,\lambda_2$ with $\lambda_1\ne0$ or $\lambda_2\ne0$. Moreover, if $\alpha_0+\lambda_1f_1+\lambda_2f_2\equiv\alpha_0'+\lambda_1'f'_1+\lambda_2'f'_2$, then $\alpha_0\equiv \alpha_0'$, $f_1\equiv f'_1$, $f_2\equiv f'_2$ and $\lambda_1=\lambda_1'$, $\lambda_2=\lambda_2'$.
\end{theorem}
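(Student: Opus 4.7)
The plan is to give a direct, non-contradictory construction: starting from the a priori estimates developed in earlier sections of the paper, we exhibit two perturbation profiles $f_1,f_2\in BV$ supported on $C_o$ and verify the hypothesis of Theorem~\ref{Chr91} along a suitable sequence $(\delta_n,u_n)$ for every nonzero combination $\alpha_0+\lambda_1f_1+\lambda_2f_2$.

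First, since $\alpha_0\in\mathcal{E}$ its maximal future development is in particular non-complete, so by the criterion of \cite{Chr93} there is a singular endpoint $e$ on the central geodesic $\Gamma$ with $2m/r\nrightarrow 0$ as one approaches $e$. I would introduce the double null coordinates $(\ub,u)$ relative to $e$ and set up the relaxed a priori estimates on a trapezoidal region bounded by $\Cb_0$, $C_{u_0}$, a short incoming cone $\Cb_{\ub_*}$ and a late outgoing cone $C_{u_*}$ with $u_*$ close to $0^-$. Because these estimates involve only $BV$-type norms and not sharp pointwise asymptotics, they persist for the perturbed solution on the same domain and depend continuously on the parameters $(\lambda_1,\lambda_2)$ for small perturbations.

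Next, I would construct $f_1, f_2\in BV$ with compact, disjoint supports on $C_o$ accumulating at $s=s_e$, chosen so that: (i) the disjointness of the supports and their separation from the essential support of $\alpha_0$ forces the uniqueness clause of the theorem (given any identity $\alpha_0+\lambda_1f_1+\lambda_2f_2\equiv\alpha_0'+\lambda_1'f_1'+\lambda_2'f_2'$, one reads off $\lambda_i$ and $f_i$ component by component); (ii) the supports are arranged so that their outgoing characteristics focus near two spheres $S_{1,n}$ and $S_{2,n}$ on a common outgoing cone $C_{u_n}$ for a sequence $u_n\to 0^-$; and (iii) the two directions produce linearly independent contributions to the pair $\bigl(r(\delta_n,u_n)-r_n,\,m(\delta_n,u_n)-m_n\bigr)$ as one moves along $\Cb_{\delta_n}$. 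Having two independent degrees of freedom is what allows us simultaneously to preserve non-completeness (the perturbation of the singular endpoint structure) and to tune the mass-versus-radius ratio to satisfy \eqref{maintheoremcondition} for every $(\lambda_1,\lambda_2)\ne(0,0)$, which is why the exceptional set here has codimension at least two rather than one.

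Finally, for each nonzero $(\lambda_1,\lambda_2)$ I would select $u_n\to 0^-$ by tracing the supports of $f_1,f_2$ back to $\Gamma$ along null characteristics, and then choose $\delta_n\to 0^+$ so that the characteristics of $\lambda_1 f_1$ and $\lambda_2 f_2$ intersect $C_{u_n}$ inside $\Cb_{\delta_n}$ at spheres of area radii $r_{1,n},r_{2,n}$ satisfying $(r-r_n)/r_n\le c_0$. The relaxed a priori estimates then yield a quantitative lower bound of the form
\begin{equation*}
2\bigl(m(\delta_n,u_n)-m_n\bigr)\ge\kappa(\lambda_1,\lambda_2)\,(r-r_n)-\mathcal{O}\bigl((r-r_n)^{1+\varepsilon}\bigr),
\end{equation*}
with $\kappa(\lambda_1,\lambda_2)>0$ as soon as $(\lambda_1,\lambda_2)\ne(0,0)$, and this will dominate $c_1 r(r-r_n)r_n^{-1}\log\bigl(r_n/(r-r_n)\bigr)$ for $n$ sufficiently large because the logarithmic factor is captured by the implied $\varepsilon>0$ in the error term. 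The main obstacle, and the heart of the argument, is this final quantitative comparison: one must prove that the $BV$-amplitude of the perturbation injected on $C_o$ translates, after nonlinear propagation through the background determined by $\alpha_0$, into a definite gain in the Hawking mass along $\Cb_{\delta_n}$, with all nonlinear interaction errors remaining genuinely subleading. This is precisely where the robustness of the non-sharp estimates is essential, since Christodoulou's sharp bounds in \cite{Chr99} were intimately tied to spherical symmetry while the weaker bounds adopted here are formulated to survive extension beyond symmetry, as advertised in the abstract.
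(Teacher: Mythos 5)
Your final quantitative comparison does not work as stated, and this is the heart of the matter. You claim a lower bound of the form $2(m-m_n)\ge\kappa(\lambda_1,\lambda_2)(r-r_n)-\mathcal{O}((r-r_n)^{1+\varepsilon})$ and assert that it dominates $c_1 r(r-r_n)r_n^{-1}\log(r_n/(r-r_n))$ because ``the logarithmic factor is captured by the implied $\varepsilon$.'' But the logarithm sits in the quantity to be \emph{beaten}, not in the error term: since $r/r_n\to1$ and $(r-r_n)/r_n\to0$, the right-hand side of \eqref{maintheoremcondition} behaves like $c_1(r-r_n)\log(r_n/(r-r_n))$, which is eventually \emph{larger} than $\kappa(r-r_n)$ for any fixed $\kappa$. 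No linear-in-$(r-r_n)$ mass gain can close the argument. The actual mechanism (both here and in \cite{Chr99}) hinges on the degeneration $\Omega_0(u)\to0$ along $\Cb_0$ (the Lemma in Section 3, which encodes $2m/r\nrightarrow0$), and on calibrating $\delta_n$ to $\Omega_n=\Omega_0(u_n)$ via \eqref{deltan} or \eqref{deltau} so that $\log(r_n/(r-r_n))\lesssim\Omega_n^{-4}\varphi_n^2$ (resp.\ $\Omega_n^{-\gamma}$), while the mass flux per unit $\ub$ is $\sim\Omega_n^{-2}\varphi_n^2$ against a radius growth rate $\sim\Omega_n^2$; the ratio of mass gain to radius gain thus diverges like $\Omega_n^{-4}\varphi_n^2$, which is exactly what is needed to absorb the divergent logarithm. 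Your proposal never invokes $\Omega_0\to0$ and never specifies $\delta_n$ in terms of it, so the divergence that makes the inequality attainable is absent.

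The construction of $f_1,f_2$ is also not the right one, and the codimension-two structure is misattributed. In the paper the perturbations vanish identically on $[0,r_0)$, i.e.\ strictly outside the causal past of $e$, precisely so that $e$, $\varphi(u)$ and $\Omega_0(u)$ are unchanged and the uniqueness clause can be read off; supports ``accumulating at $s=s_e$'' from inside would move the singular endpoint and destroy this. Moreover, two functions are needed not to produce ``linearly independent contributions to $(r-r_n,m-m_n)$,'' but because of a case analysis you omit entirely: if $\varphi(u)$ is unbounded (Theorem \ref{instability1}) or if the generic condition \eqref{genericcondition} already holds (Theorem \ref{instability2}), \emph{no perturbation is needed}; otherwise \eqref{nogenericcondition} holds for all $\gamma\in(0,4)$, and $f_1$ (a jump of size $1$ at $r_0$) forces blow-up of $\Omega_0^{\gamma-4}f(u;\gamma)$ for $\gamma\in(2,4)$ while $f_2$, whose profile $\sqrt{\tfrac{\D}{\D r}[(r-r_0)\Omega_0^2(u)]}$ is dictated by the decay of $\Omega_0$, handles $\gamma\in(0,2)$. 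Without the dichotomy, the explicit profiles tied to $\Omega_0$, and the $\gamma$-indexed family of criteria, the proposal does not reach the stated conclusion.
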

\begin{remark}
As in \cite{Chr99}, we will also reprove this theorem where the exceptional set $\mathcal{E}$ is of codimension at least $2$, which is stronger than what we state in Theorem \ref{Chr99}. Both in Christodoulou's proof and the proof in this paper, $f_2$ is shown to be absolutely continuous, and therefore the conclusions of Theorem \ref{Chr99} hold for $\alpha_0$ being absolutely continuous, which is of course more regular than being of bounded variation.
\end{remark}

In proving this theorem, instead of using double null coordinate, Christodoulou worked in a dimensionless coordinate $(s,t)$ relative to the singular endpoint $e$:
\begin{align*}
u=u_0\mathrm{e}^{-t},\ -2r=u_0\mathrm{e}^{s-t},
\end{align*}
and the first step of the proof by contradiction is to assume that given any $\varepsilon>0$, the opposite of \eqref{maintheoremcondition}, i.e.,
\begin{align}\label{contradictionassumption}
2(m(s,t)-m(0,t))\le c_1r(s,t)s\log\left(\frac{1}{s}\right)
\end{align}
holds in $\{0\le s\le c_0\}\bigcap\{0\le\ub\le\varepsilon\}$. For the Einstein-scalar field system, the mass $m$ governs the whole system and has good monotone properties. Christodoulou was able to estimate all related geometric quantities in a sharp way in terms of $m(s,t)-m(0,t)$, which is bounded from \eqref{contradictionassumption}, and use these estimates to find some $(s_\varepsilon, t_\varepsilon)\in\{0\le s\le c_0\}\bigcap\{0\le\ub\le\varepsilon\}$ such that the opposite of \eqref{contradictionassumption} holds for some particular $(s,t)=(s_\varepsilon,t_\varepsilon)$, which is a contradiction. 

However, in order to extend this result beyond spherical symmetry, much more things need to be concerned. First of all, we need to derive suitable a priori estimates in order to establish the \emph{existence} of the solution. Second, we may not benefit from the assumptions like \eqref{contradictionassumption} which is from proof by contradiction because the mass, which is essentially the $L^2$ integral of $L\phi$ (together with the outgoing shear) over $C_u$, can no longer govern the whole system without symmetries. In addition, the estimates derived by Christodoulou is so sharp that it is not easy to extend them beyond spherical symmetry. 

In this paper,  we are able to derive a priori $L^\infty$ bounds of the geometric quantities, including $\frac{\partial}{\partial\ub}\phi$, $\frac{\partial}{\partial u}\phi$, and the derivatives of $r$ and $\Omega$ defined by $-2\Omega^2=g\left(\frac{\partial}{\partial\ub},\frac{\partial}{\partial u}\right)$. These a priori estimates are proved by bootstrap argument to hold in a region deep enough to the future. It then follows easily that the condition \eqref{maintheoremcondition} eventually holds before these estimates fail. These estimates are robust and analogues of them may hold when no symmetries are imposed. The generalization of these estimates without symmetries will also be used in proving the \emph{existence}. There is a simple way to understand the difference between two arguments, that is, the a priori estimates we derive will in particular imply that some analogue of the assumption \eqref{contradictionassumption} really holds with $c_1$ replaced by a larger constant depending on the $L^\infty$ bounds of the data on $C_{o}$ and this condition with a larger constant still implies that \eqref{maintheoremcondition} holds by Christodoulou's argument. Nevertheless, we do not need to repeat Christodoulou's argument because from the bootstrap argument, all estimates are obtained simultaneously and \eqref{maintheoremcondition} is then simply a direct conclusion.

The new argument presented in this paper can possibly be extended to the case when no symmetries are assumed. In an another paper \cite{Li-Liu} by the authors, we consider the characteristic initial value problem of the Einstein-scalar field equations, with the initial data given on two null cones intersecting at a sphere. The incoming null cone is assumed to be spherically symmetric and singular at its vertex, in the sense that $\frac{2m}{r}\nrightarrow0$ when approaching it. No symmetries are imposed on the outgoing null cone.  Then we will show that the argument presented in this paper can be directly generalized and we can also prove an instability theorem like Theorem \ref{Chr99}. We suggest the readers refer to \cite{Li-Liu} for the precise statement. Finally, we should also mention that the estimates derived in this paper, and also in \cite{Li-Liu}, share many common features with those in the work of An-Luk \cite{An-Luk} where they worked with the spacetime region deep near the vertex which is regular. Readers may also refer to \cite{Li-Liu} for some discussions about this.

\section{Double null coordinates and equations}

\subsection{Double null coordinate}

The spherically symmetric spacetime can be studied through its 2-dimensional quotient spacetime manifold with boundary $\Gamma$, the fixed point set of the $SO(3)$ action, being a timelike geodesic, which we call the central line. We use a double null coordinate $(\ub,u)$, where $\ub$, $u$ are optical functions, which means that their level sets $\Cb_{\ub}$ and $C_u$ are incoming and outgoing null cones invariant under the $SO(3)$ action respectively. In the quotient spacetime, $\Cb_{\ub}$ and $C_u$ are then incoming and outgoing null lines respectively. We then denote
\begin{align*}
L=\frac{\partial}{\partial \ub},\ \Lb=\frac{\partial}{\partial u},
\end{align*}
and define the lapse function $\Omega$ by
\begin{align*}
-2\Omega^2=g(L,\Lb).
\end{align*}
Then the metric has the form
\begin{align*}
-2\Omega^2(\D\ub\otimes\D u+\D u\otimes\D\ub)+r^2\D\sigma_{\mathbb{S}^2}
\end{align*}
where the area radius function $r=r(\ub,u)$ is defined by
\begin{align*}
\text{Area}(S_{\ub,u})=4\pi r^2,
\end{align*}
and $\D\sigma_{\mathbb{S}^2}$ is the standard metric of the unit sphere.

\subsection{Equations} From the form of the metric, the unknowns of the Einstein-scalar field equations are $r$, $\Omega$ and the scalar field function $\phi$. What we really concern are their derivatives. We define the null expansions relative to the normalized pair of null vectors $\Omega^{-2}L$, $\Lb$ and the mass function $m$ by
\begin{align*}
h=\Omega^{-2}D r,\ \hb=\Db r,\ m=\frac{r}{2}(1+h\hb),
\end{align*}
where $D$ and $\Db$ are the restrictions on the orbit spheres of the Lie derivatives along $L$ and $\Lb$. When applying on functions, $D$ and $\Db$ are simply the ordinary derivatives. We then define the $D$ derivative of the lapse $\Omega$\begin{align*}
 \omega=D\log\Omega,
\end{align*}
while its $\Db$ derivative is not needed in this paper. Finally, we also need the derivatives of the scalar field function $\phi$:
\begin{align*}
L\phi=\frac{\partial}{\partial\ub}\phi,\ \Lb\phi=\frac{\partial}{\partial u}\phi.
\end{align*}

We then list below all the equations which are satisfied by the above quantities above and are needed in this paper. First of all, we have the following five null structure equations:\footnote{Readers may refer to \cite{Chr91} for the derivations of these equations, though the notations have slight differences. These equations can also be directly written down from the general null structure equations without symmetries, which can be found in the authors related paper \cite{Li-Liu} mentioned above. The derivations of these equations  in vacuum can be found in Christodoulou's work  \cite{Chr} on the formation of black holes.}
\begin{align}
\label{Dh}Dh=&-r\Omega^{-2}(L\phi)^2,\\
\label{Dbh}\Db(\Omega^2h)=&-\frac{\Omega^2(1+h\hb)}{r},\\
\label{Dhb}D\hb=&-\frac{\Omega^2(1+h\hb)}{r},\\
\label{Dbhb}\Db(\Omega^{-2}\hb)=&-r\Omega^{-2}(\Lb\phi)^2,\\
\label{Dbomega}\Db\omega=&\frac{\Omega^2(1+h\hb)}{r^2}-L\phi\Lb\phi.
\end{align}
The following two equations, which are equivalent, are the wave equation:
\begin{align}
\label{DbLphi}\Db(rL\phi)=&-\Omega^2h\Lb\phi,\\
\label{DLbphi}D(r\Lb\phi)=&-\hb L\phi.
\end{align}
Finally, we have the following equation about the mass function $m$:
\begin{align}
\label{Dm}Dm=&-\frac{1}{2}\hb\Omega^{-2}(rL\phi)^2,\\
\label{Dbm}\Db m=&-\frac{1}{2}h(r\Lb\phi)^2.
\end{align}

\section{A priori bounds for the solution}

We begin the proof of Theorem \ref{main}. Recall that we start from an arbitrary initial data $\alpha_0\in BV$ and the central line has a singular endpoint $e$, approaching which $\frac{2m}{r}\nrightarrow0$. The double null coordinate $(\ub,u)$ is chosen such that $\ub=0, u=-r$ on the boundary of the causal past of $e$, and  $u=u_0=-r_0$ and $\ub$ increases towards the future on the initial null cone $C_o$ where $r_0$ is the area radius of $\Cb_e\bigcap C_o$.

\subsection{Geometry on $\Cb_0$}

First of all we would like derive some identities on $\Cb_0$, the boundary of the causal past of $e$. We denote the restrictions on $\Cb_0$ of some geometric quantities, which are considered as functions of $u$:
$$\psi=\psi(u)=r\Lb\phi\Big|_{\Cb_0},\ \varphi=\varphi(u)=rL\phi\Big|_{\Cb_0},\ \Omega_0=\Omega_0(u)=\Omega\Big|_{\Cb_0},\ h_0=h_0(u)=h\big|_{\Cb_0}.$$
From $u=-r$ on $\Cb_0$, we must have $\hb\big|_{\Cb_0}\equiv-1$. Substituting this into \eqref{Dbhb}, we find
\begin{align}\label{Omega_0}
\frac{\partial}{\partial u}\log\Omega_0=-\frac{1}{2}\frac{\psi^2}{|u|},\ \text{and hence}\ -\log\frac{\Omega_0^2(u)}{\Omega_0^2(u_0)}=\int_{u_0}^u\frac{\psi^2(u')}{|u'|}\D u'.\end{align}
From \eqref{Dbh}, we have
\begin{align}\label{Omega_0^2h_0}
\frac{\partial}{\partial u}(\Omega_0^2h_0)=-\frac{\Omega_0^2(1-h_0)}{|u|},\ \text{and hence}\ -\log\frac{\Omega_0^2(u)h_0(u)}{\Omega_0^2(u_0)h_0(u_0)}=\int_{u_0}^u\frac{1}{|u'|}\left(\frac{1}{h_0(u')}-1\right)\D u'.
\end{align}
Because $m\big|_{\Cb_0}\ge0$ and the apparent horizon does not intersects $\Cb_0$, then $0<h_0\le 1$. Then we are going to prove an important lemma.
\begin{lemma}
Both $\Omega_0^2h_0$ and $\Omega_0$ are monotonically decreasing and converge to $0$ as $u\to0^-$. 
\end{lemma}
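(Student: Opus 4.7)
The monotonicity assertions are immediate from the two identities just above the lemma. Equation \eqref{Omega_0} gives $\partial_u \log\Omega_0 = -\psi^2/(2|u|) \le 0$, and \eqref{Omega_0^2h_0} together with $0 < h_0 \le 1$ gives $\partial_u(\Omega_0^2 h_0) = -\Omega_0^2(1-h_0)/|u| \le 0$. Both quantities are thus positive and non-increasing in $u$, and each therefore admits a limit in $[0,\infty)$ as $u \to 0^-$. Since $h_0 \le 1$ yields $\Omega_0^2 h_0 \le \Omega_0^2$, it will suffice to show $\Omega_0 \to 0$.

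I plan to prove this by contradiction. Suppose $\Omega_0(u) \to c > 0$ as $u \to 0^-$. Then \eqref{Omega_0} forces $\int_{u_0}^0 \psi^2(u')/|u'|\,du' < \infty$; in the logarithmic variable $\tau := -\log|u|$ (so that $\tau \to +\infty$ as $u \to 0^-$) this reads $\psi^2 \in L^1([\tau_0,\infty), d\tau)$ with $\tau_0 := -\log|u_0|$. My goal is to convert this integrability into a decay statement for $\mu := 1 - h_0 = 2m/r\big|_{\Cb_0}$, contradicting the defining property of the singular endpoint $e$, namely that $2m/r \nrightarrow 0$ along $\Cb_0$ as one approaches $e$.

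The translation runs through a closed scalar ODE. Combining $m|_{\Cb_0} = \tfrac{|u|}{2}(1-h_0)$ with $\partial_u m = -\tfrac{1}{2} h_0 \psi^2$ (from \eqref{Dbm}), a direct computation yields
\[
\partial_\tau \mu = (1+\psi^2)\mu - \psi^2.
\]
Integrating with the factor $e^{-J(\tau)}$, where $J(\tau) := (\tau - \tau_0) + \int_{\tau_0}^\tau \psi^2(\tau')\,d\tau'$, gives
\[
\mu(\tau) = e^{J(\tau)}\Bigl[\mu(\tau_0) - \int_{\tau_0}^\tau \psi^2(\tau') e^{-J(\tau')}\,d\tau'\Bigr].
\]
The crucial input is then the two-sided a priori bound $0 \le \mu < 1$ on $\Cb_0$ (from $m \ge 0$ and the absence of trapped surfaces on $\Cb_0$): the bracket is monotonically decreasing in $\tau$, and since $e^{J(\tau)} \to +\infty$, non-negativity and boundedness of $\mu$ together force the bracket to vanish at $+\infty$. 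Rewriting the solution as $\mu(\tau) = \int_\tau^\infty \psi^2(\tau') e^{J(\tau)-J(\tau')}\,d\tau'$ and bounding $e^{J(\tau)-J(\tau')} \le e^{-(\tau'-\tau)} \le 1$ (since $J(\tau') - J(\tau) \ge \tau' - \tau \ge 0$) yields the tail estimate $\mu(\tau) \le \int_\tau^\infty \psi^2(\tau')\,d\tau'$, which tends to $0$ under the $L^1$ hypothesis on $\psi^2$. This contradicts $2m/r \nrightarrow 0$ along $\Cb_0$, closing the argument.

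The main obstacle is the ODE step, specifically the use of the double-sided constraint $0 \le \mu < 1$ to pin down the unique bounded solution generated by the integrating-factor representation and then to force its decay to zero under $\psi^2 \in L^1(d\tau)$. A secondary point worth checking is that the singular-endpoint hypothesis ``$2m/r \nrightarrow 0$ approaching $e$'' transfers to the same non-decay along $\Cb_0$ as $u \to 0^-$; this should follow from $\Cb_0$ being the past null boundary of $e$ together with the monotonicity of $m$ in spherical symmetry.
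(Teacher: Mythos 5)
Your proof is correct, but it takes a genuinely different route from the paper's for the convergence statements. The paper first proves $\Omega_0^2h_0\to0$ by showing that the integral in \eqref{Omega_0^2h_0} diverges: using the monotonicity of $m(0,u)$ it bounds $\int_{3u}^{u}\frac{1}{|u'|}\left(\frac{1}{h_0}-1\right)\D u'$ from below by $\log\bigl(\frac{1-\frac{1}{3}\frac{2m(0,u)}{|u|}}{1-\frac{2m(0,u)}{|u|}}\bigr)$, so boundedness of that integral would force $\frac{2m}{r}\to0$ on $\Cb_0$; it then runs a second, separate contradiction argument to upgrade $\Omega_0^2h_0\to 0$ to $\Omega_0\to0$ (if $\Omega_0$ had a positive lower bound, $h_0\to0$ would make the right-hand side of the equation in \eqref{Omega_0^2h_0} non-integrable and drive $\Omega_0^2h_0$ to $-\infty$). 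You instead attack $\Omega_0\to0$ directly: assuming $\Omega_0\nrightarrow0$ gives $\psi^2\in L^1(\D\tau)$, and your exact linear ODE $\partial_\tau\mu=(1+\psi^2)\mu-\psi^2$ for $\mu=\frac{2m}{r}\big|_{\Cb_0}$ (which I verified against \eqref{Dbm} and \eqref{Omega_0}--\eqref{Omega_0^2h_0}), combined with the two-sided bound $0\le\mu<1$, pins down the bounded solution and yields the tail estimate $\mu(\tau)\le\int_\tau^\infty\psi^2\,\D\tau'$, hence $\mu\to0$ --- the same contradiction with the singular-endpoint criterion. The one step you flag as an obstacle, the vanishing of the bracket, is immediate: $0\le B(\tau)=e^{-J(\tau)}\mu(\tau)\le e^{-J(\tau)}\to0$. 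Your route is slightly more economical in that a single contradiction argument delivers both limits (since $\Omega_0^2h_0\le\Omega_0^2$), whereas the paper needs two; both arguments ultimately rest on the same inputs, namely $\Db m\le 0$ on $\Cb_0$ and the reading of $\frac{2m}{r}\nrightarrow0$ as non-decay along $\Cb_0$ as $u\to0^-$ --- the paper uses the latter without comment, exactly as you do.
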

\begin{proof}
The monotonicity follows from the fact that the integrands in \eqref{Omega_0} and \eqref{Omega_0^2h_0} are positive. From Lemma 2 in \cite{Chr99} the integral in \eqref{Omega_0^2h_0} tends to infinity as $u\to0^-$. We rewrite this proof using the notations in this paper. Indeed, on $\Cb_0$, it holds $\frac{2m}{r}\big|_{\Cb_0}=1-h_0$, then using the fact that $m(0,u)$ is decreasing which follows from \eqref{Dbm}, we have
\begin{align*}
&\int_{3u}^u\frac{1}{|u'|}\left(\frac{1}{h_0(u')}-1\right)\D u'=\int_{3u}^u\frac{1}{|u'|}\frac{\frac{2m(0,u')}{|u'|}}{1-\frac{2m(0,u')}{|u'|}}\D u'\\
\ge&\int_{3u}^u\frac{1}{|u'|}\frac{\frac{|u|}{|u'|}\frac{2m(0,u)}{|u|}}{1-\frac{|u|}{|u'|}\frac{2m(0,u)}{|u|}}\D u'=\log\left(\frac{1-\frac{1}{3}\frac{2m(0,u)}{|u|}}{1-\frac{2m(0,u)}{|u|}}\right).
\end{align*}
If the integral in \eqref{Omega_0^2h_0} is bounded for all $u\in[u_0,0)$, then the first integral above should tend to zero when $u\to0^-$. However, this implies that $\frac{2m}{r}\to0$ from the above inequality, a contradiction. Therefore $\Omega_0^2h_0\to0$ as $u\to0^-$. If $\Omega_0\nrightarrow0$, then $\Omega_0$ has a positive lower bound because it is decreasing. Therefore $\Omega^2h_0\to0$ implies that $h_0\to0$. Substitute this to the equation in \eqref{Omega_0^2h_0}, we find as $u\to0^-$, $\Omega_0^2(u_0)h_0(u_0)-\Omega_0^2(u)h_0(u)=\int_{u_0}^u\frac{\Omega_0^2(1-h_0)}{|u'|}\D u'\to+\infty$ and hence $\Omega_0^2(u)h_0(u)\to-\infty$, a contradiction. We conclude that $\Omega_0\to0$ as $u\to0^-$ and the proof is completed.
\end{proof}
\begin{remark}
The proof in the following then depends only on the fact that $\Omega_0\to0$ monotonically. The infiniteness of the integral in \eqref{Omega_0^2h_0} depends strictly on the monotonicity of mass $m$ along $\Cb_0$. We do not expect a robust argument of this proof since the criteria $\frac{2m}{r}\nrightarrow0$ may not make sense beyond spherical symmetry. A robust version of this lemma, which is beyond reach right now, should include another suitable criteria in general case which is still an active area of research.
\end{remark}

\subsection{The a priori estimates} We are going to derive the a priori estimates for the geometric quantities. We fix a small constant $\delta>0$ and a constant $u_1\in(u_0,0)$ and denote
\begin{align*}
\mathscr{F}=\mathscr{F}(u_0,u_1)=\max\{1,\sup_{u_0\le u\le u_1}|\varphi(u)|\}.
\end{align*}
Without loss of generality, we also assume that $\Omega(u_0)\le1$. By the monotonicity of $\Omega_0$, we have $\Omega_0(u)\le1$ for all $u\in[u_0,0)$. Then we are going to prove
\begin{theorem}\label{estimate}
There exists a universal large constant $C_0\ge1$ such that the following statement is true. Suppose that 
\begin{align*}
\mathcal{A}=\mathcal{A}(\delta,u_0,u_1)=\max\{1,\sup_{0\le\ub\le\delta}\mathscr{F}^{-1}(|rL\phi(\ub,u_0)|+|u_0||\omega(\ub,u_0)|)\}<+\infty,
\end{align*}
and for some $C\ge C_0$ we have
\begin{align}\label{smallness}
C^2\delta|u_1|^{-1}\mathscr{F}\mathscr{W}^{\frac{1}{2}}\mathcal{A}\le1,\ \text{where}\ \mathscr{W}=\mathscr{W}(u_0,u_1)=\max\left\{1,\left|\log\frac{\Omega_0(u_1)}{\Omega_0(u_0)}\right|\right\}.
\end{align} Then we have the following estimates for $0\le\ub\le\delta$, $u_0\le u\le u_1$:\footnote{The notation $A\lesssim B$ means $A\le cB$ for some universal constant $c$.}
\begin{align}
\label{geometryestimate}\frac{1}{2}\Omega_0\le\Omega\le 2\Omega_0,&\ \frac{1}{2}|u|\le r\le 2|u|,\\
\label{estimate-Lphi}|rL\phi|\lesssim&\mathscr{F}\mathcal{A},\\
\label{estimate-Lbphi}|r\Lb\phi-\psi|\lesssim&\delta|u|^{-1}\mathscr{F}\mathcal{A},\\
\label{estimate-h}|h-h_0|\lesssim&\delta|u|^{-1}\Omega_0^{-2}\mathscr{F}^2\mathcal{A}^2,\\
\label{estimate-hb}|\hb+1|\lesssim&\delta|u|^{-1}\mathscr{F}\mathcal{A},\\
\label{estimate-omega}|u||\omega|\lesssim&\mathscr{F}\mathscr{W}^{\frac{1}{2}}\mathcal{A}.
\end{align}
Moreover, we have the improved estimate
\begin{align}\label{estimate-Lphiimp}
|rL\phi(\ub,u)-\varphi(u)|\lesssim|rL\phi(\ub,u_0)-\varphi(u_0)|+\delta|u|^{-1}\mathscr{F}^2\mathscr{W}^{\frac{1}{2}}\mathcal{A}^2.
\end{align}
\end{theorem}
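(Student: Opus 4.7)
The plan is to run a continuity/bootstrap argument on the region $\mathcal{R}^{*}=\{0\le\ub\le\delta^{*},\ u_0\le u\le u_1\}$, where $\delta^{*}\in(0,\delta]$ is defined to be the supremum of $\delta'$ for which the estimates \eqref{geometryestimate}--\eqref{estimate-omega} hold on the corresponding region with all constants doubled. Standard local existence in the characteristic problem makes this supremum positive, so once we recover the estimates with the stated constants on $\mathcal{R}^{*}$ we must have $\delta^{*}=\delta$. The order of recovery is dictated by the structure of the equations in Section 2: \eqref{Dh}, \eqref{Dhb}, \eqref{DLbphi} and the identity $D\log\Omega=\omega$ are naturally integrated in $\ub$ from $\Cb_0$, using the data $\hb|_{\Cb_0}=-1$, $r|_{\Cb_0}=|u|$, $\Omega|_{\Cb_0}=\Omega_0$, while \eqref{DbLphi} and \eqref{Dbomega} are naturally integrated in $u$ from $C_{u_0}$, using the data encoded in $\mathcal{A}$.

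First I would recover \eqref{estimate-hb} by a direct $\ub$-integration of \eqref{Dhb} from $\hb|_{\Cb_0}=-1$; the bootstrap bound on $\Omega^2(1+h\hb)/r$ gives the prefactor $\delta|u|^{-1}$ with no cancellation. The two bounds in \eqref{geometryestimate} then follow by integrating $h=\Omega^{-2}Dr$ and $D\log\Omega=\omega$ in $\ub$: the deviations from $|u|$ and from $\Omega_0$ are at most $C\delta|u_1|^{-1}\mathscr{F}\mathscr{W}^{1/2}\mathcal{A}$, which \eqref{smallness} forces to be smaller than $\log 2$ once $C_0$ is taken large enough. Estimates \eqref{estimate-Lbphi} and \eqref{estimate-h} then follow from $\ub$-integrations of \eqref{DLbphi} and \eqref{Dh}, using the pointwise bounds $|L\phi|\lesssim\mathscr{F}\mathcal{A}/|u|$ (known from the same bootstrap) and $r\Omega^{-2}(L\phi)^2\lesssim\Omega_0^{-2}\mathscr{F}^2\mathcal{A}^2/|u|$.

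For the improved estimate \eqref{estimate-Lphiimp} I would introduce $E(\ub,u)=rL\phi(\ub,u)-\varphi(u)$, which vanishes on $\Cb_0$; on that hypersurface the equation \eqref{DbLphi} specialises to $\varphi'(u)=-\Omega_0^2h_0\psi(u)/|u|$, and subtracting yields an equation of the schematic form
\[
\Db E=-\Omega^2h(\Lb\phi-\psi/r)-(\Omega^2h-\Omega_0^2h_0)\psi/r-\Omega_0^2h_0\psi(r^{-1}-|u|^{-1}),
\]
each of whose terms carries a prefactor of order $\delta|u|^{-1}$ coming from \eqref{estimate-Lbphi}, \eqref{estimate-h}, \eqref{estimate-hb} and \eqref{geometryestimate}. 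Integrating in $u$ from $u_0$ and applying Cauchy--Schwarz against the weight $\psi^2/|u'|$---whose $L^1$ integral on $[u_0,u]$ equals $\log(\Omega_0^2(u_0)/\Omega_0^2(u))\le\mathscr{W}$ by \eqref{Omega_0}---produces the $\mathscr{W}^{1/2}$ factor of \eqref{estimate-Lphiimp}. The unconditional estimate \eqref{estimate-Lphi} follows by combining \eqref{estimate-Lphiimp} with $|\varphi|\le\mathscr{F}$ and the hypothesis $|rL\phi(\ub,u_0)|\le\mathscr{F}\mathcal{A}$.

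The $\omega$ estimate \eqref{estimate-omega} is the decisive step. Integrating \eqref{Dbomega} in $u$ produces $\int_{u_0}^u\Omega^2(1+h\hb)r^{-2}du'$ and $-\int_{u_0}^u L\phi\,\Lb\phi\,du'$. For the first, I would decompose $1+h\hb=(1-h_0)+(h-h_0)\hb+h_0(\hb+1)$ and invoke the $\Cb_0$-identity \eqref{Omega_0^2h_0} to obtain $\int_{u_0}^u\Omega_0^2(1-h_0)|u'|^{-1}du'\le\Omega_0^2(u_0)\le1$, hence $\int_{u_0}^u\Omega_0^2(1-h_0)|u'|^{-2}du'\lesssim|u|^{-1}$, the remaining perturbative terms being absorbed via \eqref{estimate-h}, \eqref{estimate-hb} and \eqref{smallness}. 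For the second, Cauchy--Schwarz against $\psi^2/|u'|$ gives
\[
\left|\int_{u_0}^u L\phi\,\Lb\phi\,du'\right|\lesssim\mathscr{F}\mathcal{A}\left(\int_{u_0}^u\frac{du'}{|u'|^3}\right)^{1/2}\left(\int_{u_0}^u\frac{\psi^2}{|u'|}du'\right)^{1/2}\lesssim\frac{\mathscr{F}\mathcal{A}\,\mathscr{W}^{1/2}}{|u|}.
\]
Multiplying through by $|u|$ and adding the initial-data contribution $|u_0||\omega(\ub,u_0)|\le\mathscr{F}\mathcal{A}$ yields \eqref{estimate-omega}. The bootstrap closes once every accumulated constant is absorbed into a factor $\lesssim C^{-1}$ via \eqref{smallness}, with $C_0$ chosen sufficiently large. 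The main obstacle is precisely this $L\phi\,\Lb\phi$ cross-term in the $\omega$ equation: it is the unique source of the $\mathscr{W}^{1/2}$ weight, and dictates why \eqref{smallness} must include $\mathscr{W}^{1/2}$.
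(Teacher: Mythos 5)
Your proposal is correct and follows essentially the same route as the paper: a bootstrap closed by integrating \eqref{Dh}, \eqref{Dhb}, \eqref{DLbphi}, $D\log\Omega=\omega$ in $\ub$ and \eqref{DbLphi}, \eqref{Dbomega} in $u$, with the $\mathscr{W}^{1/2}$ factor produced in both places by Cauchy--Schwarz against the weight $\psi^2/|u'|$ whose integral is controlled via \eqref{Omega_0}. The only differences are cosmetic (order of recovery, a continuity-in-$\delta$ formulation instead of the paper's $C^{1/4}$-hierarchy of bootstrap constants, and a slightly more refined but unnecessary decomposition of $1+h\hb$ in the $\omega$ equation).
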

\begin{proof}
We begin the proof by the following \emph{bootstrap assumptions}: 
\begin{align}
\label{bootstrapLphi}|rL\phi|\lesssim&C^{\frac{1}{4}}\mathscr{F}\mathcal{A},\\
\label{bootstrapLbphi}|r\Lb\phi-\psi|\lesssim&C^{\frac{1}{4}}\delta|u|^{-1}\mathscr{F}\mathcal{A},\\
\label{bootstraph}|h-h_0|\lesssim&C^{\frac{1}{2}}\delta|u|^{-1}\Omega_0^{-2}\mathscr{F}^2\mathcal{A}^2,\\
\label{bootstraphb}|\hb+1|\lesssim&C^{\frac{1}{4}}\delta|u|^{-1}\mathscr{F}\mathcal{A},\\
\label{bootstrapomega}|u||\omega|\lesssim&C^{\frac{1}{4}}\mathscr{F}\mathscr{W}^{\frac{1}{2}}\mathcal{A}.
\end{align}

From the equation \eqref{Dbomega} and \eqref{bootstrapomega}, we have
\begin{align*}
|\log\Omega-\log\Omega_0|\le\int_0^\delta|\omega|\D\ub\lesssim C^{\frac{1}{4}}\delta|u|^{-1}\mathscr{F}\mathscr{W}^{\frac{1}{2}}\mathcal{A}\lesssim C^{-1}.
\end{align*}
The last inequality is because of \eqref{smallness} and $|u|\ge|u_1|$\footnote{Because \eqref{smallness} is used very frequently in a similar manner, we will not point it out again when we use \eqref{smallness} in the rest of the paper. Because $\mathscr{W}\ge1$, \eqref{smallness} will also be used in the form $C^2\delta|u_1|^{-1}\mathscr{F}\mathcal{A}\le1$.}.  By choosing $C_0$ (and hence $C$) sufficiently large, we have
\begin{align*}
|\log\Omega-\log\Omega_0|\le \log 2
\end{align*}
and therefore \eqref{geometryestimate} holds for $\Omega$. Moreover, we have
\begin{align}\label{estimate-Omega-Omega0}
|\Omega-\Omega_0|\le\int_0^\delta\left|\Omega\omega\right|\D\ub\lesssim C^{\frac{1}{4}}\delta|u|^{-1}\Omega_0\mathscr{F}\mathscr{W}^{\frac{1}{2}}\mathcal{A}.
\end{align}

For $r$, we note that, from \eqref{bootstraph},
\begin{align}\label{estimate-h1}
|\Omega^2h|\lesssim\Omega_0^2\left(h_0+C^{\frac{1}{2}}\delta|u|^{-1}\Omega_0^{-2}\mathscr{F}^2\mathcal{A}^2\right)\lesssim\mathscr{F}\mathcal{A}.
\end{align}
The second inequality holds because $\Omega_0\le1\le \mathscr{F}\mathcal{A}$ by definition. We then use the equation $Dr=\Omega^2h$ to obtain\begin{align}\label{estimate-r-r0}
|r-|u||\le\int_0^\delta|\Omega^2h|\D\ub\lesssim\delta\mathscr{F}\mathcal{A}.
\end{align}
We then deduce that $|r-|u||\lesssim C^{-1}|u|$ and \eqref{geometryestimate} holds for $r$ if $C_0$ is sufficiently large\footnote{Similar to \eqref{smallness}, the estimates \eqref{geometryestimate} are used frequently and we will not point this out in the argument.}.

For $L\phi$, we consider the equation \eqref{DbLphi}. We write
\begin{align}\label{DbrLphi-varphi}
\frac{\partial}{\partial u}(rL\phi-\varphi)=-\left(\Omega^2hr^{-1}(r\Lb\phi)-\Omega_0^2h_0|u|^{-1}\psi\right).
\end{align}
Using \eqref{estimate-Lbphi}, \eqref{estimate-h}, \eqref{estimate-Omega-Omega0}, \eqref{estimate-h1} and \eqref{estimate-r-r0}, the right hand side can be estimated by
\begin{align*}
&|\Omega^2hr^{-1}(r\Lb\phi)-\Omega_0^2h_0|u|^{-1}\psi|\\
\lesssim&|\Omega^2-\Omega_0^2||h_0|u|^{-1}\psi|+|\Omega^2||h-h_0|||u|^{-1}\psi|+|\Omega^2h||r^{-1}-|u|^{-1}||\psi|+|\Omega^2hr^{-1}||r\Lb\phi-\psi|\\
\lesssim&C^{\frac{1}{4}}\delta|u|^{-1}\Omega_0^2\mathscr{F}\mathscr{W}^{\frac{1}{2}}\mathcal{A}\cdot|u|^{-1}|\psi|+C^{\frac{1}{2}}\delta|u|^{-1}\mathscr{F}^2\mathcal{A}^2\cdot|u|^{-1}|\psi|\\
&+\mathscr{F}\mathcal{A}\cdot\delta|u|^{-2}\mathscr{F}\mathcal{A}\cdot|\psi|+|u|^{-1}\mathscr{F}\mathcal{A}\cdot C^{\frac{1}{4}}\delta|u|^{-1}\mathscr{F}\mathcal{A}\\
\lesssim& C^{\frac{1}{2}}\delta|u|^{-1}\mathscr{F}^2\mathcal{A}^2\cdot|u|^{-1}\left(1+(1+\Omega_0^2\mathscr{W}^{\frac{1}{2}})|\psi|\right).
\end{align*}
Integrating the equation \eqref{DbrLphi-varphi}, we have
\begin{equation}\label{proof-estimate-Lphiimp}
\begin{split}
|rL\phi-\varphi|
\lesssim&|rL\phi-\varphi|\big|_{C_{u_0}}+C^{\frac{1}{2}}\delta|u|^{-1}\mathscr{F}^2\mathcal{A}^2\\
&+C^{\frac{1}{2}}\delta\mathscr{F}^2\mathcal{A}^2\left(\int_{u_0}^u\frac{(1+\Omega_0^2\mathscr{W}^{\frac{1}{2}})^2|\psi|^2}{|u'|}\D u'\right)^{\frac{1}{2}}\left(\int_{u_0}^u\frac{1}{|u'|^3}\D u'\right)^{\frac{1}{2}}\\
\lesssim&|rL\phi-\varphi|\big|_{C_{u_0}}+ C^{\frac{1}{2}}\delta|u|^{-1}\mathscr{F}^2\mathscr{W}^{\frac{1}{2}}\mathcal{A}^2
\end{split}
\end{equation}
where the second inequality is because of \eqref{Omega_0}. Then the estimate \eqref{estimate-Lphi} follows by \eqref{smallness}.

For $\Lb\phi$, we note that, from \eqref{bootstraphb}
\begin{align}\label{estimate-hb1}
|\hb|\lesssim1+C^{\frac{1}{4}}\delta|u|^{-1}\mathscr{F}\mathcal{A}\lesssim1+C^{-1}\lesssim1.
\end{align}
We then simply integrate the equation \eqref{DLbphi} and obtain, by \eqref{estimate-Lphi} we have proved above,
\begin{align*}
|r\Lb\phi-\psi|\lesssim\int_0^\delta|\hb L\phi|\D\ub\lesssim\delta|u|^{-1}\mathscr{F}\mathcal{A}.
\end{align*}

For $h$ and $\hb$, we use the equations \eqref{Dh} and \eqref{Dhb}. From \eqref{Dh}, using \eqref{estimate-Lphi}, we have
\begin{align*}
|h-h_0|\lesssim\int_0^\delta |r\Omega^{-2}(L\phi)^2|\D \ub\lesssim\delta\Omega_0^{-2}|u|^{-1}\mathscr{F}^2\mathcal{A}^2,
\end{align*}
which is the desired estimate \eqref{estimate-h}. Using \eqref{Dhb}, using \eqref{estimate-h1} and \eqref{estimate-hb1}, we have
\begin{align*}
|\hb+1|\lesssim\int_0^\delta\left|\frac{\Omega^2(1+h\hb)}{r}\right|\D\ub\lesssim\delta(1+\mathscr{F}\mathcal{A})\cdot|u|^{-1}\lesssim\delta|u|^{-1}\mathscr{F}\mathcal{A},
\end{align*}
which is the desired estimate \eqref{estimate-hb}.

For $\omega$, we use the equation \eqref{Dbomega}. The right hand side of \eqref{Dbomega} can be estimated by, using  \eqref{estimate-Lphi}, \eqref{estimate-Lbphi}, \eqref{estimate-h1} and \eqref{estimate-hb1},
\begin{align*}
\left|\frac{\Omega^2(1+h\hb)}{r^2}-L\phi\Lb\phi\right|\lesssim|u|^{-2}\mathscr{F}\mathcal{A}+|u|^{-2}\mathscr{F}\mathcal{A}(|\psi|+\delta|u|^{-1}\mathscr{F}\mathcal{A}).
\end{align*}
Integrating \eqref{Dbomega} and using \eqref{Omega_0}, we then have
\begin{align*}
|\omega|\lesssim|u|^{-1}\mathscr{F}\mathscr{W}^{\frac{1}{2}}\mathcal{A},
\end{align*}
which is the desired estimate \eqref{estimate-omega}.

Finally, we find estimates \eqref{estimate-Lphi}-\eqref{estimate-omega} we have proved improve the bootstrap assumptions \eqref{bootstrapLphi}-\eqref{bootstrapomega} if $C_0$ is sufficiently large. Therefore the estimates \eqref{estimate-Lphi}-\eqref{estimate-omega} hold without assuming \eqref{bootstrapLphi}-\eqref{bootstrapomega}. Using these estimates to repeat the derivation as in \eqref{proof-estimate-Lphiimp}, with the constant $C$ dropped, we know that \eqref{estimate-Lphiimp} also holds.

\end{proof}

\section{Instability theorems}

We then turn to the proof of the instability theorems. We divide the proof in two cases according to the behavior of $\varphi(u)$ as $u\to0^-$. The first case is the following.

\begin{theorem}\label{instability1}
If $\varphi(u)$ is unbounded as $u\to0^-$, then there exists two sequences $\delta_n\to0^+$ and $u_n\to0^-$ such that \eqref{maintheoremcondition} holds for $\ub=\delta_n,u=u_n$.
\end{theorem}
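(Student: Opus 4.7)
The plan is to exploit the unboundedness of $\varphi(u)$ to extract a sequence $u_n\to 0^-$ for which a suitable choice of $\delta_n$ makes \eqref{maintheoremcondition} hold. Pick $u_n\to 0^-$ with $|\varphi(u_n)|$ strictly increasing to $+\infty$ and equal to $\sup_{u_0\le u\le u_n}|\varphi(u)|$, so that $\mathscr{F}(u_0,u_n)=|\varphi(u_n)|\to+\infty$. Assuming the initial data on $C_{u_0}$ satisfies $\sup_{\ub\ge 0}(|rL\phi(\ub,u_0)|+|u_0\omega(\ub,u_0)|)\le M<+\infty$, one has $\mathscr{F}_n\mathcal{A}_n\le M$ uniformly in $n$, and $\mathscr{W}_n\lesssim 1+\log(1/\Omega_0(u_n))$ grows only logarithmically since $\Omega_0(u)\to 0^+$ monotonically by the preceding Lemma.

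For each $n$, set $u_1=u_n$ and choose $\delta_n$ of order $|u_n|\,\Omega_0^2(u_n)\,h_0(u_n)/M^2$. This choice makes the smallness hypothesis \eqref{smallness} hold with ample margin (since $\Omega_0^2\mathscr{W}^{1/2}\to 0$), and also ensures that the error term in \eqref{estimate-h}, namely $\delta_n|u_n|^{-1}\Omega_0^{-2}\mathscr{F}^2\mathcal{A}^2$, is a small fraction of $h_0(u_n)$. Applying Theorem~\ref{estimate} on $[0,\delta_n]\times[u_0,u_n]$, the approximations $\hb\approx-1$ from \eqref{estimate-hb}, $\Omega\approx\Omega_0(u_n)$ from \eqref{estimate-Omega-Omega0}, and $rL\phi(\ub,u_n)=\varphi(u_n)(1+o(1))$ from \eqref{estimate-Lphiimp} (the correction is of order $M$, hence negligible compared to $|\varphi(u_n)|\to+\infty$) are all uniform in $\ub\in[0,\delta_n]$. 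Integrating \eqref{Dm} and $Dr=\Omega^2 h$ in $\ub$ then gives
\begin{align*}
m(\delta_n,u_n)-m_n\gtrsim \tfrac{1}{2}\delta_n\,\Omega_0^{-2}(u_n)\,\varphi(u_n)^2,\qquad r(\delta_n,u_n)-r_n\approx \delta_n\,\Omega_0^2(u_n)\,h_0(u_n),
\end{align*}
so $\eta_n:=(r-r_n)/r_n\to 0$, which in particular yields $\eta_n\le c_0$ for $n$ large.

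Substituting these into \eqref{maintheoremcondition} and cancelling $\delta_n$ reduces the required inequality to
\begin{align*}
\varphi(u_n)^2\gtrsim c_1\,\Omega_0^4(u_n)\,h_0(u_n)\,\log(1/\eta_n).
\end{align*}
With the chosen $\delta_n$, $\log(1/\eta_n)\lesssim\log(1/\Omega_0(u_n))$, so the right side is bounded by a constant times $\Omega_0^4(u_n)\log(1/\Omega_0(u_n))$, which tends to $0$, while the left side diverges. Hence \eqref{maintheoremcondition} holds for all sufficiently large $n$.

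The main obstacle is quantitative control of the error terms through the integrations: each of the approximations $\hb\approx-1$, $\Omega\approx\Omega_0$, $h\approx h_0$, $rL\phi\approx\varphi$ must deliver a correction that is a small fraction of the corresponding leading term. The a priori bounds of Theorem~\ref{estimate} are exactly designed for this purpose, but matching the $h$-approximation to the leading $h_0$ requires $\delta_n$ smaller than what \eqref{smallness} alone would allow, forcing the scaling $\delta_n\sim|u_n|\Omega_0^2 h_0/M^2$ rather than the looser $|u_n|/\mathscr{W}^{1/2}$.
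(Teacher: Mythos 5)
Your overall strategy is the same as the paper's: extract a sequence $u_n\to0^-$ along which $|\varphi(u_n)|=\sup_{u_0\le u\le u_n}|\varphi|\to\infty$, prescribe $\delta_n$ explicitly so that Theorem~\ref{estimate} applies on $[0,\delta_n]\times\{u_n\}$, integrate \eqref{Dm} to get a lower bound on $m-m_n$ of order $\delta_n\Omega_n^{-2}\varphi_n^2$, bound $r-r_n$ by $4\Omega_n^2\delta_n$, and compare. However, there is a genuine gap at the very first quantitative step: your claim that $\mathscr{F}_n\mathcal{A}_n\le M$ is false. By definition $\mathcal{A}\ge1$, so $\mathscr{F}\mathcal{A}\ge\mathscr{F}=|\varphi(u_n)|\to\infty$; for large $n$ one in fact has $\mathcal{A}_n=1$ and $\mathscr{F}_n\mathcal{A}_n=|\varphi_n|$. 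With this corrected, your choice $\delta_n\sim|u_n|\Omega_0^2(u_n)h_0(u_n)/M^2$ turns the hypothesis \eqref{smallness} into (up to constants) $\Omega_n^2h_0(u_n)|\varphi_n|\mathscr{W}_n^{1/2}\le1$, which there is no reason to expect: the rate at which $|\varphi_n|\to\infty$ (governed by $rL\phi$ on $\Cb_0$) and the rate at which $\Omega_n\to0$ (governed by $\psi=r\Lb\phi$ on $\Cb_0$ via \eqref{Omega_0}) are not related, so $\Omega_n^2|\varphi_n|$ may well diverge. For the same reason your requirement that the error in \eqref{estimate-h} be a small fraction of $h_0(u_n)$ becomes $h_0\varphi_n^2/M^2\ll h_0$, which fails outright; consequently the two-sided approximation $r-r_n\approx\delta_n\Omega_n^2h_0$ is not available (though this part is harmless, since only the upper bound $r-r_n\le4\Omega_n^2\delta_n$, coming from $Dh\le0$ in \eqref{Dh}, is actually needed, the right-hand side of \eqref{maintheoremcondition} being increasing in $r-r_n$ on the relevant range).

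The repair is precisely the paper's choice \eqref{deltan}: define $\delta_n$ by $\varphi_n^2=2^8c_1\Omega_n^4\log\frac{r_n}{4\Omega_n^2\delta_n}$, i.e.\ $\delta_n=\frac{r_n}{4\Omega_n^2}\exp\left(-\frac{\varphi_n^2}{2^8c_1\Omega_n^4}\right)$. This kills two birds with one stone. First, \eqref{smallness} becomes $\frac{C^2}{4}\Omega_n^{-2}|\varphi_n|\mathscr{W}_n^{1/2}\exp\left(-\frac{\varphi_n^2}{2^8c_1\Omega_n^4}\right)\le1$, and since the prefactor is bounded by a power of $x_n:=\varphi_n^2/\Omega_n^4\to\infty$ while the exponential is $\mathrm{e}^{-x_n/2^8c_1}$, this holds for large $n$ \emph{regardless} of the relative rates of $\varphi_n$ and $\Omega_n$. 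Second, after cancelling $\delta_n$, the inequality in \eqref{maintheoremcondition} reduces exactly to $\varphi_n^2\gtrsim c_1\Omega_n^4\log\frac{r_n}{4\Omega_n^2\delta_n}$, which is the defining relation itself rather than something to be checked separately. Your computation of the final comparison (left side diverging, right side of order $\Omega_n^4\log(1/\Omega_n)\to0$) is fine in spirit, but as written it rests on a choice of $\delta_n$ for which the a priori estimates of Theorem~\ref{estimate} have not been secured.
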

\begin{proof}
Because $\varphi(u)$ is unbounded, we can find a sequence $u_n\to0^-$ such that 
\begin{align*}
\varphi_n=\varphi(u_n)=\sup_{u_0\le u\le u_n}|\varphi(u)|\to\infty\ \text{as}\ n\to\infty.
\end{align*}
Define $\delta_n$ in terms of $u_n=-r_n$ by
\begin{align}\label{deltan}
\varphi_n^2=2^8c_1\Omega_n^4\log\frac{r_n}{4\Omega_n^2\delta_n}
\end{align}
where $\Omega_n=\Omega_0(u_n)$. It is obvious that $\delta_n\to0^+$ because $\Omega_n\to0$. We are going to prove such $\delta_n, u_n$ are two sequences we need.

We hope to apply Theorem \ref{estimate}, so we compute, for each $n$, 
\begin{align*}
C^2\delta_n|u_n|^{-1}|\varphi_n|\mathscr{W}_n^{\frac{1}{2}}=\frac{1}{4}C^2\Omega_n^{-2}\exp\left(-\frac{\varphi_n^2}{2^8c_1\Omega_n^4}\right)|\varphi_n|\mathscr{W}_n
\end{align*}
where $\mathscr{W}_n=\mathscr{W}(u_0,u_n)=\left\{1,\left|\log\frac{\Omega_n}{\Omega_0(u_0)}\right|\right\}$. We can see the right hand side tends to zero and therefore \eqref{smallness} holds for $\delta=\delta_n$, $u=u_n$ for sufficiently large $n$ depending on $C$ and the initial bound of $L\phi$ on $C_{u_0}$. As a consequence, we have the following estimates for a sufficiently large $C\ge C_0$ and $(\ub,u)\in[0,\delta_n]\times\{u_n\}$:
\begin{itemize}
\item $\displaystyle |\hb+1|\le C^{-1},\ \text{which implies}\ -\hb\ge\frac{1}{2}$.
\item $\displaystyle \Omega^{-2}\ge\frac{1}{4}\Omega_n^{-2}, 1\ge\frac{r}{2r_n}$.
\item $|rL\phi-\varphi_n|\le c|rL\phi(\ub,u_0)-\varphi(u_0)|+cC^{-1}|\varphi_n|$ for some $c$ depending on the initial bound of $L\phi$ on $C_{u_0}$ which follows from \eqref{estimate-Lphiimp} and implies that $\displaystyle |rL\phi|>\frac{1}{2}|\varphi_n|$ for $n$ sufficiently large.
\item $\displaystyle \Omega_n^2\delta_n\ge\Omega_n^2h_n\delta_n=\int_0^{\delta_n}\Omega_n^2h_n\D\ub\ge\frac{1}{4}\int_0^{\delta_n}\Omega^2h\D\ub=\frac{1}{4}(r-r_n)$, where we use $h_n=h(0,u_n)\ge h$ because of $Dh\le0$ from equation \eqref{Dh}.
\end{itemize}

From \eqref{Dm}, \eqref{deltan} and the above all estaimtes, we have, for $n$ sufficiently large,
\begin{align*}
m-m_n=&\frac{1}{2}\int_0^{\delta_n}(-\hb)\Omega^{-2}(rL\phi)^2\D\ub\\
>&\frac{1}{2^6}\delta_n\Omega_n^{-2}\varphi_n^2\\
=&\frac{1}{2^6}\delta_n\Omega_n^{-2}\cdot2^8c_1\Omega_n^4\log\frac{r_n}{4\Omega_n^2\delta_n}\\
\ge&\frac{c_1r}{2r_n}\cdot4\Omega_n^2\delta_n\log\frac{r_n}{4\Omega_n^2\delta_n}\\
\ge&\frac{c_1r}{2r_n}(r-r_n)\log\frac{r_n}{r-r_n}
\end{align*}
which is the inequality in \eqref{maintheoremcondition}. The last inequality above is because the function $x\log\frac{r_n}{x}$ is monotonically increasing for $x\in(0,4\Omega_n^2\delta_n]\subset(0,\frac{r_n}{\mathrm{e}}]$. Finally, $\frac{r-r_n}{r_n}\le c_0$ follows from
\begin{align*}
\frac{r-r_n}{r_n}\le\frac{4\Omega_n^2\delta_n}{|u_n|}=\exp\left(-\frac{\varphi_n^2}{2^8c_1\Omega_n^4}\right)
\end{align*}
since the right hand side tends to zero and hence not larger than $c_0$ if $n$ is sufficiently large. The proof is then completed.

\end{proof}

The second case is the following.
\begin{theorem}\label{instability2}
Suppose that $\varphi(u)$ is bounded by $\Phi\ge0$, and there exists some $\gamma\in(0,4)$ such that
\begin{align}\label{genericcondition}
\limsup_{u\to0^-}\Omega_0^{\gamma-4}(u)f(u;\gamma)>1
\end{align}
where the function $f$ is defined by
\begin{align*}
f(u;\gamma)=\frac{1}{\delta(u;\gamma)}\int_0^{\delta(u;\gamma)}|rL\phi(\ub,u_0)+(\varphi(u)-\varphi(u_0))|^2\D\ub
\end{align*}
and $\delta(u;\gamma)$ is defined in terms of $u$ by
\begin{align}\label{deltau}
\Omega^{4-\gamma}_0(u)=2^8c_1\Omega_0^4(u)\log\frac{|u|}{4\Omega_0^2(u)\delta(u;\gamma)}.
\end{align}
Then the conclusion of Theorem \ref{instability1} also holds.
\end{theorem}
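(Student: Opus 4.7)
The plan is to mimic the argument of Theorem \ref{instability1}, with the role played there by the unbounded sequence $\varphi_n$ now taken over by the $L^2$ average $f(u_n;\gamma)$ supplied by \eqref{genericcondition}. Writing $\Omega_n=\Omega_0(u_n)$, $r_n=|u_n|$, $\mathscr{W}_n=\mathscr{W}(u_0,u_n)$, one first uses \eqref{genericcondition} to select $u_n \to 0^-$ with $\Omega_n^{\gamma-4}f(u_n;\gamma) > 1 - o(1)$ and sets $\delta_n = \delta(u_n;\gamma)$; from \eqref{deltau},
\begin{align*}
\frac{\delta_n}{|u_n|} = \frac{1}{4\Omega_n^2}\exp\!\left(-\frac{\Omega_n^{-\gamma}}{2^8 c_1}\right),
\end{align*}
which decays faster than any inverse polynomial in $\Omega_n$. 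Since $\mathscr{F}\le\max\{1,\Phi\}$ by hypothesis and $\mathscr{W}_n^{1/2}\lesssim(\log\Omega_n^{-1})^{1/2}$, the smallness condition \eqref{smallness} is trivially verified at $(\delta_n,u_n)$ for $n$ large, and Theorem \ref{estimate} then supplies the a priori bounds on $[0,\delta_n]\times\{u_n\}$.

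The central calculation is to rewrite \eqref{estimate-Lphiimp} in the approximate form
\begin{align*}
rL\phi(\ub,u_n) = \bigl(rL\phi(\ub,u_0) + (\varphi(u_n) - \varphi(u_0))\bigr) + E_n(\ub),\qquad \|E_n\|_{L^\infty}\lesssim\delta_n|u_n|^{-1}\mathscr{F}^2\mathscr{W}_n^{1/2}\mathcal{A}^2,
\end{align*}
where $E_n$ inherits the exponential smallness of $\delta_n/|u_n|$. Plugging this into \eqref{Dm} and combining with $-\hb\ge1-o(1)$, $\Omega^{-2}\ge(1-o(1))\Omega_n^{-2}$, the elementary inequality $(a+E)^2\ge\tfrac12 a^2-2E^2$, and the lower bound $f(u_n;\gamma)\ge(1-o(1))\Omega_n^{4-\gamma}$ furnished by \eqref{genericcondition}, one obtains
\begin{align*}
m-m_n\;\ge\;\left(\tfrac14-o(1)\right)\Omega_n^{-2}\delta_n f(u_n;\gamma)\;\ge\;\left(\tfrac14-o(1)\right)\delta_n\Omega_n^{2-\gamma}.
\end{align*}

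On the other hand, integrating $Dr=\Omega^2h$ with the bounds of Theorem \ref{estimate} gives $r-r_n\le(1+o(1))\Omega_n^2\delta_n$, and by the monotonicity of $x\mapsto x\log(r_n/x)$ on $(0,r_n/\mathrm{e}]$ together with the defining relation \eqref{deltau} (i.e., $\log(r_n/(4\Omega_n^2\delta_n))=\Omega_n^{-\gamma}/(2^8c_1)$), the right-hand side of \eqref{maintheoremcondition} is bounded by $(1+o(1))2^{-8}\delta_n\Omega_n^{2-\gamma}$. Comparison with the mass lower bound yields \eqref{maintheoremcondition} for all $n$ sufficiently large; the auxiliary constraint $(r-r_n)/r_n\le c_0$ is automatic since the left-hand side is exponentially small. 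The main obstacle is the balance in the middle step: the exponentially small error $E_n$ must be absorbed by $\sqrt{f(u_n;\gamma)}$, which is only polynomially small by the generic lower bound; this works precisely because $\|E_n\|_{L^\infty}$ decays faster than any polynomial in $\Omega_n$ while $\sqrt{f(u_n;\gamma)}\ge\Omega_n^{(4-\gamma)/2}$, and the restriction $\gamma\in(0,4)$ is exactly what makes the two sides compatible.
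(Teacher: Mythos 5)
Your proposal is correct and follows essentially the same route as the paper: select $u_n$ from the $\limsup$ condition, define $\delta_n$ by \eqref{deltau}, verify \eqref{smallness}, use the integrated form of \eqref{DbrLphi-varphi} (the content of \eqref{estimate-Lphiimp}) to reduce $\int_0^{\delta_n}|rL\phi(\ub,u_n)|^2\D\ub$ to $\delta_n f(u_n;\gamma)$ up to a super-polynomially small error, and then feed this into \eqref{Dm} exactly as in Theorem \ref{instability1}. The only differences are cosmetic (sharper $1\pm o(1)$ constants in place of the paper's factors of $2$ and $4$, and the inequality $(a+E)^2\ge\tfrac12 a^2-2E^2$ in place of the paper's $|a+E|\ge|a|-|E|$), and your identification of the key balance — exponentially small error versus the polynomially small lower bound $\Omega_n^{4-\gamma}$ — is precisely the mechanism the paper exploits.
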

\begin{proof}
From \eqref{genericcondition}, there exists a sequence $u_n\to0^-$ such that 
\begin{align}\label{genericcondition1}
f(u_n;\gamma)>\Omega^{4-\gamma}(u_n).
\end{align}
From \eqref{deltau}, we have $\delta_n=\delta(u_n;\gamma)\to0^+$ and
\begin{align*}
C^2\delta_n|u_n|^{-1}\Phi\mathscr{W}_n=C^2\frac{1}{4}\Omega_n^{-2}\exp\left(-\frac{1}{2^8c_1\Omega_n^{\gamma}}\right)\mathscr{W}_n.
\end{align*}
The right hand side tends to zero and therefore \eqref{smallness} holds for $\delta=\delta_n$, $u=u_n$ for $n$ sufficiently large. Then once we can prove that
\begin{align*}
\int_0^{\delta_n}|rL\phi(\ub,u_n)|^2\D\ub>\frac{1}{4}\delta_n\Omega_n^{4-\gamma}.
\end{align*}
the conclusion follows using the argument in the proof of Theorem \ref{instability1}.

To this end, we go back to equation \eqref{DbrLphi-varphi}. Integrating it on $C_{u_n}$ leads to
\begin{align*}
(rL\phi(\ub,u_n)-\varphi_n)-(rL\phi(\ub,u_0)-\varphi(u_0))=\int_{u_0}^u-\left(\Omega^2hr^{-1}(r\Lb\phi)-\Omega_0^2h_0|u'|^{-1}\psi\right)\D u'.
\end{align*}
The right hand side can be estimated similarly to the estimate of the error terms in \eqref{proof-estimate-Lphiimp}. Then we have, for $n$ sufficiently large,
\begin{align*}
|rL\phi(\ub,u_n)|\ge|rL\phi(\ub,u_0)+(\varphi_n-\varphi(u_0))|-c\delta|u_n|^{-1}\Phi^2\mathscr{W}_n^{\frac{1}{2}}
\end{align*}
for some constant $c$ depending on the initial bound of $L\phi$ on $C_{u_0}$. Now from \eqref{deltau} again,
\begin{align*}
c\delta|u_n|^{-1}\Phi^2\mathscr{W}_n^{\frac{1}{2}}=\frac{1}{4}\Omega_n^{-2}\exp\left(-\frac{1}{2^8c_1\Omega_n^{\gamma}}\right)\mathscr{W}_n^{\frac{1}{2}}\cdot c\Phi^2\le\sqrt{\frac{1}{4} \Omega^{4-\gamma}_n}
\end{align*}
if $n$ is sufficiently large, then from \eqref{genericcondition1},
\begin{align*}
\int_0^{\delta_n}|rL\phi(\ub,u_n)|^2\D\ub>\frac{1}{2}\delta_nf(u_n;\gamma)-\frac{1}{4}\delta_n\Omega_n^{4-\gamma}\ge\frac{1}{4}\delta_n\Omega_n^{4-\gamma}
\end{align*}
which is the desired inequality.
\end{proof}

\begin{remark}It is worth mentioning that in Christodoulou's original proof, when $\varphi(u)$ is bounded but not tends to zero, the conclusions of Theorem \ref{instability1} holds without any additional conditions like \eqref{genericcondition}. Indeed, the condition \eqref{genericcondition} is slightly different from that in Christodoulou's proof and we can see when $\varphi(u)$ is bounded but not tends to zero, \eqref{genericcondition} holds identically because $rL\phi(\ub,u_0)$ is of bounded variation and hence can be made right-continuous.
\end{remark}

The remaining part of the proof of Theorem \ref{main} is then similar to that in the last section in \cite{Chr99}. We still present the proof here for the sake of completeness.
\begin{proof}[Proof of Theorem \ref{main}] We fix the coordinate $\ub=r-r_0$ on $C_o=C_{u_0}$. Then
\begin{align*}
\alpha_0=\frac{\partial}{\partial r}(r\phi)=rL\phi\big|_{C_o}+\phi\big|_{C_o}.
\end{align*}
We denote $\theta_0=\theta_0(r)=rL\phi\big|_{\ub=r-r_0,u=u_0}$. As in  \cite{Chr99}, $\alpha_0$ being of bounded variation is equivalent to $\theta_0$ being bounded variation and $\frac{|\theta_0|}{r}\in L^1(0,+\infty)$. Therefore we consider instead $\theta_0$ in such a space. Suppose that $\theta_0\in\mathcal{E}$, then there exists a singular endpoint $e$ on $\Gamma$ and we have a double null coordinate $(\ub,u)$ relative to $e$ and in particular, $\Cb_e\bigcap C_o$ has area radius $r_0$. According to Theorem \ref{instability1} and \ref{instability2}, we have $\varphi(u)$ is bounded and 
\begin{align}\label{nogenericcondition}
\limsup_{u\to0^-}\Omega_0^{\gamma-4}(u)f(u;\gamma)\le1
\end{align}
for all $\gamma\in(0,4)$. We then define $f_1=f_1(r)$ such that it vanishes on $[0,r_0)$ and near infinity, and is absolutely continuous on $[r_0,+\infty)$ with $f_1(r_0)=1$. We also define $f_2=f_2(r)$ to be absolutely continuous on $[0,+\infty)$ such that it vanishes on $[0,r_0]$ and near infinity, and
\begin{align*}
f_2(r)=\sqrt{\frac{\D}{\D r}\left[(r-r_0)\Omega_0^2(u)\right]},\ r\in[r_0,r_0+1]
\end{align*}
where $u$ and $r$ are related through $r-r_0=\delta(u;\gamma=2)$ defined by \eqref{deltau}. Then for all $\gamma\in(0,4)$ and $\lambda_1\ne0$, we have
\begin{align}\label{limf1}
\lim_{u\to0^-}\frac{\Omega_0^{\gamma-4}(u)}{\delta(u;\gamma)}\int_0^{\delta(u;\gamma)}\lambda_1^2f_1^2(\ub+r_0)\D\ub=+\infty.
\end{align}
On the other hand, we define $u_{(\gamma)}$ through $\delta(u;\gamma)=\delta(u_{(\gamma)};2)$. From \eqref{deltau}, $\delta(u;\gamma)$ is increasing relative to $|u|$ and decreasing relative to $\gamma$. If $\gamma\in(0,2)$, we must have $|u|<|u_{(\gamma)}|$ and therefore $\Omega_0(u_{(\gamma)})>\Omega_0(u)$. We then have
\begin{equation}\label{lim1}
\begin{split}
&\lim_{u\to0^-}\frac{\Omega_0^{\gamma-4}(u)}{\delta(u;\gamma)}\int_0^{\delta(u;\gamma)}\lambda_2^2f_2^2(\ub+r_0)\D\ub\\
=&\lim_{u\to0^-}\lambda_2^2\Omega_0^{\gamma-4}(u)\Omega_0^2(u_{(\gamma)})\ge\lambda_2^2\lim_{u\to0^-}\Omega_0^{\gamma-2}(u)=+\infty.
\end{split}
\end{equation}
If $\gamma\in(2,4)$, we have $\Omega_0(u_{(\gamma)})<\Omega_0(u)$ and
\begin{equation}\label{lim2}
\begin{split}
&\lim_{u\to0^-}\frac{\Omega_0^{\gamma-4}(u)}{\delta(u;\gamma)}\int_0^{\delta(u;\gamma)}\lambda_2^2f_2^2(\ub+r_0)\D\ub\\
=&\lim_{u\to0^-}\lambda_2^2\Omega_0^{\gamma-4}(u)\Omega_0^2(u_{(\gamma)})\le\lambda_2^2\lim_{u\to0^-}\Omega_0^{\gamma-2}(u)=0.
\end{split}
\end{equation}
We then compute, when $\lambda_1\ne0$, for $\gamma\in(2,4)$, from \eqref{nogenericcondition}, \eqref{limf1}, \eqref{lim1} and \eqref{lim2},
\begin{align*}
&\limsup_{u\to0^-}\frac{\Omega_0^{\gamma-4}(u)}{\delta(u;\gamma)}\int_0^{\delta(u;\gamma)}|rL\phi(\ub,u_0)+\lambda_1f_1(\ub+r_0)+\lambda_2f_2(\ub+r_0)+(\varphi(u)-\varphi(u_0))|^2\D\ub\\
\ge&\liminf_{u\to0^-}\frac{\Omega_0^{\gamma-4}(u)}{2\delta(u;\gamma)}\int_0^{\delta(u;\gamma)}|\lambda_1f_1(\ub+r_0)|^2\D\ub\\
&-\limsup_{u\to0^-}\frac{\Omega_0^{\gamma-4}(u)}{\delta(u;\gamma)}\int_0^{\delta(u;\gamma)}|\lambda_2f_2(\ub+r_0)|^2\D\ub\\
&-\limsup_{u\to0^-}\frac{\Omega_0^{\gamma-4}(u)}{\delta(u;\gamma)}\int_0^{\delta(u;\gamma)}|rL\phi(\ub,u_0)+(\varphi(u)-\varphi(u_0))|^2\D\ub\\
=&+\infty.
\end{align*}
When $\lambda_1=0,\lambda_2\ne0$, we compute, for $\gamma\in(0,2)$, from \eqref{nogenericcondition} and \eqref{lim1},
\begin{align*}
&\limsup_{u\to0^-}\frac{\Omega_0^{\gamma-4}(u)}{\delta(u;\gamma)}\int_0^{\delta(u;\gamma)}|rL\phi(\ub,u_0)+\lambda_1f_1(\ub+r_0)+\lambda_2f_2(\ub+r_0)+(\varphi(u)-\varphi(u_0))|^2\D\ub\\
\ge&\liminf_{u\to0^-}\frac{\Omega_0^{\gamma-4}(u)}{2\delta(u;\gamma)}\int_0^{\delta(u;\gamma)}|\lambda_2f_2(\ub+r_0)|^2\D\ub\\
&-\limsup_{u\to0^-}\frac{\Omega_0^{\gamma-4}(u)}{\delta(u;\gamma)}\int_0^{\delta(u;\gamma)}|rL\phi(\ub,u_0)+(\varphi(u)-\varphi(u_0))|^2\D\ub\\
=&+\infty.
\end{align*}
This proves that $\theta_0+\lambda_1f_1+\lambda_2f_2\notin\mathcal{E}$ for all $\lambda_1,\lambda_2$ with $\lambda_1\ne0$ or $\lambda_2\ne0$. Now suppose that $\theta,\theta'\in\mathcal{E}$ and
$$\theta_{\lambda_1,\lambda_2}:=\theta_0+\lambda_1f_1+\lambda_2f_2\equiv\theta'_{\lambda'_1,\lambda'_2}:=\theta_0'+\lambda_1'f_1'+\lambda_2'f_2'.$$
Assume that $e'$ is the singular endpoint of $\Gamma$ in the maximal future development of $\theta_0'$ (and hence of $\theta'_{\lambda'_1,\lambda'_2}$) and $\Cb_{e'}\bigcap C_o$ has area radius $r_0'$. We then have $e=e'$ and $r_0=r_0'$. Because $f_i,f_i'$ vanish on $[0,r_0)$, we will have $\theta(r)\equiv\theta'(r)$ for $r\in[0,r_0)$ and hence the double coordinate $(\ub,u)$, the functions $\varphi(u)$ and $\Omega_0(u)$, are the same. Therefore $f_1\equiv f_1'$, $f_2\equiv f_2'$. We then write
\begin{align*}
\theta_{\lambda_1-\lambda_1',\lambda_2-\lambda_2'}\equiv\theta_0'.
\end{align*}
From the above argument, when $\lambda_1\ne\lambda_1'$ or $\lambda_2\ne\lambda_2'$, $\theta_{\lambda_1-\lambda_1',\lambda_2-\lambda_2'}\notin\mathcal{E}$ but $\theta_0'\in\mathcal{E}$. Therefore we must have $\lambda_1=\lambda_1'$ and $\lambda_2=\lambda_2'$. Finally, we conclude that $\theta\equiv\theta'$ and the proof is completed.
\end{proof}

\end{document}